\documentclass{LMCS}%

\def\doi{8 (2:15) 2012}
\lmcsheading%
{\doi}
{1--10}
{}
{}
{May\phantom.~17, 2011}
{Jun.~25, 2012}
{}
\usepackage{enumerate,hyperref}
\usepackage{amssymb}%
\usepackage{amsmath}%
\setcounter{MaxMatrixCols}{30}%
\usepackage{amsfonts}%
\usepackage{graphicx}
\providecommand{\U}[1]{\protect\rule{.1in}{.1in}}
\begin{document}

\title[Precompact Apartness Spaces]{Precompact Apartness Spaces}
\author[D.~S.~Bridges]{Douglas S. Bridges}
\address{Department of Mathematics \& Statistics, University
of Canterbury, Private Bag 4800, Christchurch 8140, New Zealand}
\email{d.bridges@math.canterbury.ac.nz}

\keywords{constructive, apartness, uniform, precompact}
\subjclass{G.m}
\amsclass{03F60, 54E05, 54E15}

\begin{abstract}%
\noindent
We present a notion of precompactness, and study some of its properties, in
the context of apartness spaces whose apartness structure is not necessarily
induced by any uniform one. The presentation lies entirely with a Bishop-style
constructive framework, and is a contribution to the ongoing development of
the constructive theories of apartness and uniformity.

\end{abstract}%

\maketitle

\section{Introduction}

The apparent difficulty of producing a decent notion of \emph{compactness} for
the constructive\footnote{For us, \emph{constructive mathematics} means mathematics in the style of
Errett Bishop: that is, mathematics developed with intuitionistic logic and
some foundational system such as the constructive set theory in
\cite{Aczel1,Aczel2} or the type theory in \cite{ML}. We shall assume that the
reader is familiar with, or has access to, the fundamentals of constructive
analysis, as found in \cite{Bishop,BB,bvtech} (see also \cite{Beeson,TvD})}
theory of pre-apartness spaces has provoked several responses
\cite{dsb,Diener,Steinke}. In this paper we offer a partial
response---precompactness---without claiming, or even suggesting, that it may
be the \textquotedblleft right\textquotedblright\ one: we prefer to leave our
offer on the table, to be considered further and then judged alongside those
of other mathematicians who have considered the compactness problem.

Let $X$ be an inhabited set with an inequality relation $\neq$, and let
$\bowtie$ be a relation between subsets of $X$ that satisfies certain axioms
to be stated shortly. We define three types of complement for a subset $S$ of
$X$:%
\begin{align*}
\lnot S  & \equiv\left\{  x\in X:x\notin S\right\}  ,\\%
\mathord{\sim}%
S  & \equiv\left\{  x\in X:\forall_{s\in S}\left(  x\neq s\right)  \right\}
,\\
-S  & \equiv\left\{  x\in X:\left\{  x\right\}  \bowtie S\right\}  .
\end{align*}
For $x\in X$, we normally write $x\bowtie S$ rather than $\left\{  x\right\}
\bowtie S$. We call $\bowtie$ a \textbf{pre-apartness} on $X$ if it satisfies
these four axioms:

\begin{enumerate}[\bf(B1)]
\item $X\bowtie\varnothing$

\item $-A\subset%
\mathord{\sim}%
\,A$

\item $\left(  \left(  A_{1}\cup A_{2}\right)  \bowtie(B_{1}\cup
B_{2})\right)  \Leftrightarrow\forall_{i,j\in\left\{  1,2\right\}  }%
A_{i}\bowtie B_{j}$

\item $-A\subset%
\mathord{\sim}%
B\Rightarrow-A\subset-B$.
\end{enumerate}%

\noindent
Taken with the relation $\bowtie$, the set $X$ then becomes a
\textbf{pre-apartness space. }If, in addition, $\bowtie$ satisfies
\footnote{\textbf{(B4)} can actually, and easily, be derived from \textbf{(B1)}--\textbf{(B3)}
and \textbf{(B5)}; see page 69 of \cite{bvbook}.}

\begin{enumerate}[\bf(B1)]
\item[\bf(B5)] $x\in-A\Rightarrow\exists_{S\subset X}(x\in-S\wedge
X=-A\cup S)$,
\end{enumerate}%

\noindent
then it is called an \textbf{apartness} on $X$, and $X$ becomes an
\textbf{apartness space.}

There is a natural topology associated with a pre-apartness space: the
\textbf{apartness topology}, in which the apartness complements form a base of
open sets. It is this topology that we refer to when we consider notions like
\emph{open}, \emph{closed}, and \emph{dense} below.

Perhaps the most important example of an apartness space is given by a uniform
space: that is, an inhabited set $X$ with a uniform structure $\mathcal{U}%
$.\footnote{We omit the detailed definition of \emph{uniform structure}, referring the
reader to Section 3.2 of \cite{bvbook} for more information about apartness
and uniform spaces.} In this case, the apartness is defined, for subsets $S,T
$ of $X$, by%
\[
S\bowtie T\Leftrightarrow\exists_{U\in\mathcal{U}}(S\times T\subset%
\mathord{\sim}%
U).
\]
The apartness topology corresponding to this apartness relation coincides with
the usual uniform topology on $X$. Note that, in contrast to the classical
situation, for us a uniform space is Hausdorff, by definition. Moreover, the
inequality relation defined on $X$ by%
\[
x\neq y\Leftrightarrow\exists_{U\in\mathcal{U}}\left(  \left(  x,y\right)
\notin U\right)
\]
is \textbf{tight}:%
\[
\lnot\left(  x\neq y\right)  \Rightarrow x=y\text{.}
\]
Since the uniform/apartness topology is Hausdorff and the inequality is tight,
convergent nets in a uniform space have unique limits (\cite{bvbook},
Corollary 2.4.8).

A particular case occurs when $\left(  X,\rho\right)  $ is a metric space,
when the uniform structure $\mathcal{U}$ has a base comprising sets of the
form%
\[
\left\{  \left(  x,x^{\prime}\right)  \in X\times X:\rho\left(  x,x^{\prime
}\right)  <\varepsilon\right\}  ,
\]
where $\varepsilon>0$; then%
\[
S\bowtie T\Leftrightarrow\exists_{\varepsilon>0}\forall_{s\in S}\forall_{t\in
T}\left(  \rho\left(  s,t\right)  \geq\varepsilon\right)  .
\]

A uniform space $\left(  X,\mathcal{U}\right)  $ is said to be \textbf{totally
bounded} if for each $U\in\mathcal{U}$ there exist finitely many points
$x_{1},\ldots,x_{n}$ of $X$ such that%
\[
X=%
{\displaystyle\bigcup\limits_{i=1}^{n}}
\left\{  y\in X:\left(  x,y\right)  \in U\right\}  .
\]
In the case of a metric space $X$, this reduces to the familiar elementary
property of total boundedness. The closure of a totally bounded subset of a
uniform space is totally bounded (\cite{bvbook}, Proposition 3.3.5). Moreover,
as is straightforward to show, a dense subspace of a totally bounded uniform
space is totally bounded.

A mapping $f:X\rightarrow Y$ between pre-apartness spaces is said to be
\textbf{strongly continuous}\ if\footnote{Strongly continuous mappings are the morphisms in the category of
pre-apartness spaces.}%
\[
\forall_{A,B\subset X}\left(  f(A)\bowtie f(B)\Rightarrow A\bowtie B\right)
.
\]
If $(X,\mathcal{U)}$ and $\left(  Y,\mathcal{V}\right)  $ are uniform spaces,
$f:X\rightarrow Y$ is strongly continuous, and $f(X)$ is totally bounded, then
$f$ is \textbf{uniformly continuous} in the sense that%
\[
\forall_{V\in\mathcal{V}}\exists_{U\in\mathcal{U}}\forall_{x,x^{\prime}\in
X}\left(  \left(  x,x^{\prime}\right)  \in U\Rightarrow(f(x),f(x^{\prime})\in
V\right)
\]
(\cite{bvbook} Theorem 3.3.9). Uniformly continuous mappings between uniform
spaces are strongly continuous and preserve total boundedness (\cite{bvbook},
Propositions 3.3.2 and 3.3.7).

In the remainder of this paper \emph{we shall assume that our pre-apartness
spaces are symmetric}: if $A\bowtie B$, then $B\bowtie A$. Uniform apartness
spaces are symmetric.

\section{Precompact spaces}

We approach precompactness by reminding the reader of two theorems, the first
of which is a consequence of Theorem (1.4) in Chapter 5 of \cite{BR}.

\begin{thm}
\label{one}If $X$\ is a \textbf{compact}---that is, inhabited, complete, and
totally bounded---metric space, then there exists a uniformly continuous,
open\footnote{Recall that an open mapping is one that preserves the openness of sets. In
fact, the theorem in \cite{BR} proves more: there exists a \emph{uniform
quotient map}, a stronger notion than that of a uniformly continuous, open
map, from $2^{\mathbf{N}}$ onto $X$.} mapping of $2^{\mathbf{N}}$ onto $X.$
\end{thm}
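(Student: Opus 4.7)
The plan is to manufacture the required map by the familiar technique of coding a decreasing chain of $2^{-n}$-nets of $X$ by successive blocks of binary digits. First, using total boundedness at scale $2^{-n}$ for each $n$, I would produce (recursively in $n$) a finite $2^{-n}$-net $x^{(n)}_1,\ldots,x^{(n)}_{k_n}$ of $X$, and---by repeating entries if necessary---arrange $k_n=2^{m_n}$ so that this net is indexed by binary words of some prescribed length $m_n$. At the same time I would refine the nets so that each $x^{(n+1)}_j$ lies within $2^{-n}$ of some $x^{(n)}_i$, producing a tree in which successive levels are $2^{-n}$-close. Partition $\mathbf{N}$ into consecutive blocks $I_n$ of lengths $m_n$, and for $\alpha\in 2^{\mathbf{N}}$ read off $\alpha|_{I_n}$ as the index $i_n(\alpha)$ of a point $y_n(\alpha)=x^{(n)}_{i_n(\alpha)}$ in the $n$-th net, chosen so that the sequence $(y_n(\alpha))$ is Cauchy. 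Completeness of $X$ then lets me define $f(\alpha)=\lim_n y_n(\alpha)$.

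Uniform continuity is immediate: if $\alpha,\beta$ agree on $I_0\cup\cdots\cup I_n$, then $y_k(\alpha)=y_k(\beta)$ for $k\le n$, whence $\rho(f(\alpha),f(\beta))\le 2^{-n+1}$. For surjectivity, given $x\in X$, the total-boundedness data produces, at each level $n$, an index $i_n$ with $\rho(x,x^{(n)}_{i_n})<2^{-n}$; concatenating the binary expansions of the $i_n$ yields an $\alpha\in 2^{\mathbf{N}}$ for which $y_n(\alpha)\to x$, hence $f(\alpha)=x$. For openness I would argue at basic clopen cylinders: given $\alpha\in 2^{\mathbf{N}}$ and $n$, the image of the cylinder $\{\beta:\beta|_{I_0\cup\cdots\cup I_n}=\alpha|_{I_0\cup\cdots\cup I_n}\}$ is contained in the closed ball of radius $2^{-n+1}$ around $f(\alpha)$; conversely, the refinement of the tree ensures that every point of $X$ within distance $2^{-n-1}$ of $f(\alpha)$ is hit by some $\beta$ in that cylinder. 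Since cylinders form a base for the topology of $2^{\mathbf{N}}$ and balls form a base for the topology of $X$, this shows $f$ is open.

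I expect the main obstacle to be the constructive bookkeeping involved in the tree refinement: one has to pick, at level $n+1$, points of the given $2^{-n-1}$-net that are genuinely close to each level-$n$ centre, and then pad the lists to a common power of $2$, all without appealing to decidable equality or case analysis on real distances. The cleanest way to avoid these pitfalls is to quote Theorem~(1.4) of Chapter~5 of \cite{BR}, which carries out exactly this construction and produces a \emph{uniform quotient map} $g:2^{\mathbf{N}}\to X$; any uniform quotient map is both uniformly continuous and open, so the theorem follows with $f=g$.
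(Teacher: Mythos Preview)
Your proposal is correct and ends in exactly the same place as the paper: the paper does not prove this theorem at all, but merely records it as a consequence of Theorem~(1.4) in Chapter~5 of \cite{BR}, precisely the citation you invoke at the end. Your preliminary sketch of the nets-and-blocks construction is the right intuition for what that theorem does, and your own caveat about the constructive bookkeeping (and the looseness of the openness argument as stated) is well taken; since both you and the paper resolve this by appeal to \cite{BR}, there is nothing further to add.
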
%

\noindent
The metric-space case of the second was given by Ishihara and Schuster
\cite{Isch}; the proof of the following general case can be found in
\cite{bvbook} (Theorem 3.3.18).

\begin{thm}
\label{two}Let $X$ be a totally bounded uniform space with a countable base of
entourages. Then every strongly continuous mapping from $X$ into a uniform
space is uniformly continuous.
\end{thm}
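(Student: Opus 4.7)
The plan is to proceed by reductio, exploiting the countable base to enumerate possible failures of uniform continuity and using strong continuity to derive the contradiction. Fix a target entourage $V\in\mathcal{V}$ and a decreasing countable base $U_{1}\supset U_{2}\supset\cdots$ for $\mathcal{U}$. Assume that no $U_{n}$ serves as a ``modulus'' for $V$; by countable choice, obtain sequences $(x_{n}),(x'_{n})$ in $X$ with $(x_{n},x'_{n})\in U_{n}$ and $(f(x_{n}),f(x'_{n}))\notin V$. The objective is to produce subsets $A,B\subset X$ from these sequences with $f(A)\bowtie f(B)$ in $Y$ but $A\not\bowtie B$ in $X$; the strong continuity hypothesis $f(A)\bowtie f(B)\Rightarrow A\bowtie B$ then delivers the contradiction.

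For the extraction, combine total boundedness with the countable base in a diagonal pigeonhole: at stage $m$, a finite $U_{m}$-net for $X$ decomposes $X$ into finitely many cells, so one can restrict to an infinite subsequence along which both $x_{n_{k}}$ and $x'_{n_{k}}$ lie in fixed cells. Diagonalising over $m$ yields subsequences $(y_{k}),(y'_{k})$ that are jointly $\mathcal{U}$-Cauchy, in the sense that for each $m$ eventually $(y_{k},y_{\ell})$ and $(y'_{k},y'_{\ell})$ both lie in $U_{m}$. Setting $A=\{y_{k}:k\in\mathbf{N}\}$ and $B=\{y'_{k}:k\in\mathbf{N}\}$, the relation $A\not\bowtie B$ is immediate: any candidate separating entourage contains some $U_{m}$ of the base, and $(y_{k},y'_{k})\in U_{n_{k}}\subset U_{m}$ once $k$ is large.

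The heart of the argument---and the principal obstacle---is producing an entourage $W\in\mathcal{V}$ witnessing $f(A)\bowtie f(B)$. Pick symmetric $V'\in\mathcal{V}$ with $V'\circ V'\circ V'\subset V$. The natural plan is to thin once more so that the images $(f(y_{k}))$ cluster into a single $V'$-neighbourhood of a representative $f(p)$, and $(f(y'_{k}))$ into a $V'$-neighbourhood of $f(p')$, for $p,p'$ drawn from a finite $U_{m}$-net of $X$; the composition property then forces $(f(p),f(p'))\notin V'$, for otherwise the chain $(f(y_{k}),f(p)),(f(p),f(p')),(f(p'),f(y'_{k}))$ would place $(f(y_{k}),f(y'_{k}))$ in $V' \circ V' \circ V'\subset V$, contradicting the construction, so a suitably small $W$ with $V'\circ W\circ V'\subset V'$ separates $f(A)$ from $f(B)$ on a tail. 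The delicate point, and the place where a naive ``forward continuity'' argument would beg the question we are trying to settle, is engineering such clustering on the codomain side; the resolution I would pursue is a finite case analysis on the $\bowtie$-pattern of $f$ on the $U_{m}$-net, transferring apartness information from $Y$ back to $X$ via strong continuity applied pairwise to singletons of the net, so as to rule out the pigeonhole configurations incompatible with the $\mathcal{U}$-Cauchyness of $(y_{k}),(y'_{k})$ already secured on the domain side.
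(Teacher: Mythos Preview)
The paper does not itself prove this theorem: immediately before the statement it attributes the metric case to Ishihara and Schuster \cite{Isch} and refers the general case to \cite{bvbook}, Theorem 3.3.18. So there is no in-paper argument to compare against.

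Independently of that, your proposal has a structural defect relative to the framework of the paper: the entire paper is set in Bishop-style constructive mathematics with intuitionistic logic, and your outline is classical at two load-bearing points. First, you argue by reductio to obtain an existential. Assuming ``no $U_{n}$ serves as a modulus for $V$'' and deriving a contradiction yields only $\lnot\lnot\exists_{n}(\ldots)$, not the positive witness required for uniform continuity; worse, from $\lnot\forall_{x,x'}\bigl((x,x')\in U_{n}\Rightarrow(f(x),f(x'))\in V\bigr)$ you cannot constructively extract points $x_{n},x'_{n}$ at all, so your invocation of countable choice is applied to a family of witness-sets you have not shown to be inhabited. Second, the ``diagonal pigeonhole'' extraction of jointly $\mathcal{U}$-Cauchy subsequences from a sequence in a totally bounded space is a form of Bolzano--Weierstrass / sequential compactness, which fails constructively (it entails the limited principle of omniscience). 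Even classically, your closing paragraph does not finish the argument: you concede that clustering the images $f(y_{k}),f(y'_{k})$ on the $Y$-side cannot be obtained by forward continuity without circularity, and the proposed ``finite case analysis on the $\bowtie$-pattern'' is a gesture rather than a mechanism; nothing you have set up forces $f(A)\bowtie f(B)$. A constructive proof of this theorem must build the modulus positively from the countable base and total boundedness, using strong continuity in the forward direction on carefully chosen subsets, rather than by contradiction.
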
%

\noindent
In particular, every strongly continuous map from Baire space $2^{\mathbf{N}}
$ into a uniform space is uniformly continuous.

With these two theorems in mind, we define an apartness space $X$ to
be\textbf{\ precompact} if there exists a strongly continuous mapping from a
dense subset of $2^{\mathbf{N}}$ onto a dense subset of $X$.\footnote{In classical topology, the name \emph{precompact} is applied sometimes to sets
that are dense a compact space, and sometimes to totally bounded sets in a
uniform space. It is the latter application that we have in mind in making our
current definition.
\par
It is tempting to define precompactness in terms of images of arbitrary
totally bounded uniform spaces under strongly continuous mappings. However, in
the classical theory of proximity spaces, of which apartness spaces are the
constructive counterpart, every proximity space satisfying the Efremovi\v{c}
property (for which see later) would be precompact in that sense, since it
would have a totally bounded uniform structure compatible with its
proximity/apartness structure; see \cite{Naimpally} (page 72, (12.3)).} An
apartness space with a dense, precompact subset is itself precompact. The
image of a precompact apartness space under a strongly continuous mapping is precompact.

\begin{prop}
\label{AA1}A precompact uniform space is totally bounded.
\end{prop}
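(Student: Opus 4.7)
The plan is to unpack the definition of precompactness, transfer total boundedness through the strongly continuous map via Theorem~\ref{two}, and then close up from a dense subset. Precompactness of $X$ gives a dense subset $D$ of $2^{\mathbf{N}}$, a dense subset $E$ of $X$, and a strongly continuous surjection $f\colon D\to E$. Since we are working in the category of uniform spaces, $D$ inherits a uniform structure from $2^{\mathbf{N}}$ and $E$ inherits one from $X$, and in these induced structures $f$ remains strongly continuous.

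Next I would observe that $2^{\mathbf{N}}$, being a compact metric space, is totally bounded and has a countable base of entourages (for example, the sets determined by agreement on the first $n$ coordinates). The excerpt records that a dense subspace of a totally bounded uniform space is totally bounded; applied to $D\subset 2^{\mathbf{N}}$, this makes $D$ into a totally bounded uniform space with a countable base of entourages.

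Theorem~\ref{two} then applies to $f\colon D\to X$: since $D$ is totally bounded with a countable base of entourages and $f$ is strongly continuous into a uniform space, $f$ is in fact uniformly continuous. The excerpt also records that uniformly continuous maps between uniform spaces preserve total boundedness, so $f(D)=E$ is a totally bounded subset of $X$.

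Finally, $E$ is dense in $X$, so $X$ coincides with the closure of $E$ in the apartness (equivalently, uniform) topology; and the closure of a totally bounded subset of a uniform space is totally bounded (Proposition~3.3.5 of \cite{bvbook}, cited earlier). Hence $X$ itself is totally bounded. The only step requiring real care is the application of Theorem~\ref{two}: one must verify that the uniform structure inherited by $D$ from $2^{\mathbf{N}}$ has a countable base of entourages and that the ambient strongly continuous map $f\colon D\to X$ is strongly continuous in this induced sense; once this is in place, the rest of the argument is an assembly of cited results.
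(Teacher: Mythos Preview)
Your argument is correct and is essentially the same as the paper's: apply Theorem~\ref{two} to the strongly continuous map from the dense subset $D\subset 2^{\mathbf{N}}$ into $X$, deduce uniform continuity, push total boundedness forward to $f(D)$, and then close up. You simply make explicit the facts the paper leaves implicit---that $D$ is totally bounded with a countable base of entourages, and that a space with a dense totally bounded subset is totally bounded via the closure result.
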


\begin{proof}
Let $X$ be a precompact uniform space. Then there exist a dense subset of
$2^{\mathbf{N}}$ and a strongly continuous mapping $h$ of $D$ onto a dense
subset of $X$. By Theorem \ref{two}, $h$ is uniformly continuous. It follows
(see earlier) that $h(D)$ is totally bounded; since $h(D)$ is dense in $X$, we
see that $X$ itself is totally bounded.
\end{proof}

\begin{cor}
\label{AA1a}Let $f$ be a strongly continuous mapping of a precompact space $X
$ into $\mathbf{R}$. Then $f(X)$ is totally bounded, and $\sup f,\inf f$ exist.
\end{cor}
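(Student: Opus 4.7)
The plan is to assemble this directly from the two facts recorded just before Proposition \ref{AA1}: (i) the image of a precompact apartness space under a strongly continuous map is precompact, and (ii) a precompact uniform space is totally bounded. Since the target $\mathbf{R}$ is a uniform space, the image $f(X)$ inherits a uniform (hence apartness) structure, so both results apply to it.

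First I would note that $f(X)$, viewed as a uniform subspace of $\mathbf{R}$, is precompact: the map $f\colon X\to f(X)$ is strongly continuous and surjective, so by fact (i) above $f(X)$ is a precompact apartness space, and its apartness coincides with the one induced by the subspace uniform structure from $\mathbf{R}$. Then Proposition \ref{AA1} applied to $f(X)$ gives that $f(X)$ is totally bounded as a uniform space, which for a subset of $\mathbf{R}$ is the standard elementary notion of total boundedness (finite $\varepsilon$-nets for every $\varepsilon>0$).

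From total boundedness of $f(X)\subset\mathbf{R}$ the existence of $\sup f$ and $\inf f$ is a standard result of Bishop-style constructive analysis: for each $n$ choose a finite $2^{-n}$-net $F_n\subset f(X)$, form $M_n=\max F_n$, and verify that $(M_n)$ is a Cauchy sequence whose limit has the two defining properties of $\sup f(X)$, namely that it upper-bounds $f(X)$ and that for every $\varepsilon>0$ some element of $f(X)$ exceeds it by less than $\varepsilon$. The infimum is treated symmetrically. This step is purely about $\mathbf{R}$ and involves no apartness theory.

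There is no real obstacle; the only thing worth double-checking is the compatibility mentioned above---that the apartness transported onto $f(X)$ by the strongly continuous surjection from $X$ agrees with the subspace uniform apartness from $\mathbf{R}$---so that Proposition \ref{AA1} is legitimately applicable. This is immediate because precompactness of $f(X)$ only requires a strongly continuous surjection from a dense subset of $2^{\mathbf{N}}$ onto a dense subset of $f(X)$ with respect to its uniform apartness, and composing the witness for $X$ with $f$ provides exactly such a map.
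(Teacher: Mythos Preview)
Your proposal is correct and follows essentially the same route as the paper: use that strongly continuous images of precompact spaces are precompact, apply Proposition~\ref{AA1} to the uniform subspace $f(X)\subset\mathbf{R}$ to get total boundedness, and then invoke the standard constructive fact that a totally bounded subset of $\mathbf{R}$ has a supremum and an infimum (the paper cites \cite{bvtech}, Proposition 2.2.5, for this last step rather than sketching the Cauchy-sequence construction). Your final paragraph about ``the apartness transported onto $f(X)$'' is phrased a little misleadingly---a strongly continuous map does not push an apartness forward---but your resolution is the right one: $f(X)$ simply carries its subspace uniform apartness from $\mathbf{R}$, and composing $f$ with the witness for $X$ gives the required strongly continuous map into that structure.
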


\begin{proof}
By Proposition \ref{AA1}, $f(X)$ is a totally bounded subset of
$\mathbf{R}$. The rest of the proposition now follows from \cite{bvtech}
(Proposition 2.2.5).
\end{proof}

\begin{prop}
\label{AA2} A metric space is precompact if and only if it is totally bounded.
\end{prop}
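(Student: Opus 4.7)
The forward implication is immediate: a metric space carries a natural uniform structure, so if $X$ is a precompact metric space, Proposition~\ref{AA1} gives at once that $X$ is totally bounded. All the content lies in the converse.

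For the converse, suppose $X$ is a totally bounded (inhabited) metric space. I would form its metric completion $\widehat{X}$. This is complete by construction, and totally bounded because $X$ is totally bounded and dense in $\widehat{X}$; hence $\widehat{X}$ is inhabited, complete, and totally bounded---that is, compact in the sense of Theorem~\ref{one}. Applying Theorem~\ref{one} produces a uniformly continuous, open surjection $g\colon 2^{\mathbf{N}}\to\widehat{X}$. Set $D=g^{-1}(X)\subset 2^{\mathbf{N}}$ and consider the restriction $h=g|_D\colon D\to X$; the claim is that $h$ witnesses the precompactness of $X$.

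Three verifications remain. First, $D$ is dense in $2^{\mathbf{N}}$: given an inhabited open $U\subset 2^{\mathbf{N}}$, openness of $g$ makes $g(U)$ an inhabited open subset of $\widehat{X}$, and density of $X$ in $\widehat{X}$ yields a point of $X\cap g(U)$, whose preimage in $U$ lies in $D$. Second, $h(D)=X$: surjectivity of $g$ onto $\widehat{X}$ provides every $x\in X\subset\widehat{X}$ with a preimage in $2^{\mathbf{N}}$, which by definition of $D$ lies in $D$; in particular $h(D)$ is dense in $X$. Third, $h$ is strongly continuous: because the metric on $X$ is the restriction of that on $\widehat{X}$, the relation $h(A)\bowtie_{X}h(B)$ coincides with $g(A)\bowtie_{\widehat{X}}g(B)$, and the strong continuity of $g$ (a consequence of its uniform continuity, as cited in the excerpt) delivers $A\bowtie_{2^{\mathbf{N}}}B$; since $D$ carries the induced apartness, this is exactly $A\bowtie_{D}B$.

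The step I expect to be most delicate is the density of $D=g^{-1}(X)$ in $2^{\mathbf{N}}$, which rests squarely on the \emph{openness} of $g$---the very reason for invoking Theorem~\ref{one} (as opposed to a merely uniformly continuous surjection onto $\widehat{X}$). The remaining verifications amount to careful bookkeeping with restriction of the metric, the induced apartness on a subspace of $2^{\mathbf{N}}$, and the passage from uniform to strong continuity.
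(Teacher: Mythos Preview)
Your argument is correct and matches the paper's proof essentially line for line: both invoke Proposition~\ref{AA1} for the direction precompact $\Rightarrow$ totally bounded, and for the other direction both pass to the compact completion $\widehat{X}$, apply Theorem~\ref{one} to obtain a uniformly continuous open surjection $g\colon 2^{\mathbf{N}}\to\widehat{X}$, and use the openness of $g$ together with the density of $X$ in $\widehat{X}$ to conclude that $g^{-1}(X)$ is dense in $2^{\mathbf{N}}$. Your three verifications simply unpack what the paper leaves implicit; in particular, you correctly flag that the density of $D$ hinges on the openness clause in Theorem~\ref{one}.
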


\begin{proof}
Let $X$ be a totally bounded metric space, and $\widehat{X}$ its completion.
Since $X$ is dense in $\widehat{X}$, the latter is totally bounded; it is
therefore compact. By Theorem \ref{one}, there exists a uniformly
continuous---and hence strongly continuous---open mapping $h$ of
$2^{\mathbf{N}}$ onto $\widehat{X}$. Since $X$ is dense in $\widehat{X}$, the
openness of $h$ ensures that $h^{-1}(X)$ is dense in $2^{\mathbf{N}}$. Hence
$X$ is precompact.

The converse is a special case of Proposition \ref{AA1}.
\end{proof}%


We now digress to consider Cauchy nets and the extension of functions defined
on uniform subspaces\footnote{When we refer to a \textbf{uniform }(resp., \textbf{metric}%
)\textbf{\ subspace} $X$ of a pre-apartness space $Y$, we mean that there is a
uniform structure $\mathcal{U}$ (resp., metric $\rho$) on $X$ such that the
pre-apartness induced on $X$ by that on $Y$ coincides with the pre-apartness
induced by $\mathcal{U}$ (resp., $\rho$).} of apartness spaces.

Let $X$ be a pre-apartness space. We shall need the the \textbf{nested
neighbourhoods property,}

\begin{enumerate}[\textbf{(NN)}]
\item If $x\in-U$, then there exists $V\subset X$ such that
$x\in-V$ and $\lnot V\bowtie U$,
\end{enumerate}%

\noindent
and the property of \textbf{weak symmetric separatedness},

\begin{enumerate}[\textbf{(WSS)}]
\item If $S\bowtie T$, then for each $x\in X$ there exists
$U\subset X$ such that $x\in-U$ and $\lnot\left(  S-U\neq\varnothing\wedge
T-U\neq\varnothing\right)  $.
\end{enumerate}%

\noindent
Uniform spaces possess both of these properties.

A net $s\equiv\left(  x_{n}\right)  _{n\in\mathfrak{D}}$ in $X$ is said to be
\textbf{totally Cauchy} if for all subsets $A,B$ of $\mathfrak{D}$ with
$s(A)\bowtie s(B)$, there exists $N\in\mathfrak{D}$ such that%
\[
\lnot\left(  \exists_{m\in A}\left(  m\succcurlyeq N\right)  \wedge
\lnot\exists_{n\in B}\left(  n\succcurlyeq N\right)  \right)  ,
\]
where, as is the case throughout this paper, $\succcurlyeq$ is the preorder
relation on the index set $\mathfrak{D}$ of the net. Proposition 3.5.1 of
\cite{bvbook} says that the property \textbf{WSS} is equivalent to every
convergent net in $X$ being totally Cauchy. Proposition 3.5.15 of the same
book tells us that if $X$ has the property \textbf{NN}, and $s$ is a totally
Cauchy net in $X$ that has a subnet converging to a limit $x\in X$, then $s$
itself converges to $x$; and Theorem 3.5.12 that if $\left(  X,\mathcal{U}%
\right)  $ is a uniform (apartness) space, then every totally Cauchy net
$\left(  x_{n}\right)  _{n\in\mathfrak{D}}$ in $X$ with a countable subnet is
\textbf{uniformly Cauchy}, in the sense that%
\[
\forall_{U\in\mathcal{U}}\exists_{N\in\mathfrak{D}}\forall_{m,n\succcurlyeq
N}\left(  \left(  x_{m},x_{n}\right)  \in U\right)  .
\]

\begin{lem}
\label{F4}Let $\left(  Z,\bowtie\right)  $ be an apartness space, let $z\in Z$
have a countable base $\left(  V_{n}\right)  _{n\geq1}$ of neighbourhoods in
the apartness topology, and let $\left(  z_{n}\right)  _{n\in\mathfrak{D}}$ be
a net in $Z$ that converges to $z.$ Then there is a countable subnet $\left(
z_{n_{k}}\right)  _{k\geq1}$ of $\left(  z_{n}\right)  _{n\in\mathfrak{D}}$
that converges to $z.$
\end{lem}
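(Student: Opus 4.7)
The plan is to use axiom \textbf{(B3)} to replace the given countable base $(V_n)$ of neighbourhoods of $z$ with a nested one, and then to extract from $\mathfrak{D}$ an increasing sequence $(N_k)$ of indices, each chosen past the tail supplied by convergence into the $k$-th base element. The resulting sequence $(z_{N_k})_{k\geq 1}$ will serve as the required countable subnet.

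First, I refine the base. Since the apartness complements form a base of the apartness topology, each $V_n$ contains a set of the form $-A_n$ with $z\in -A_n$, and the family $(-A_n)_{n\geq 1}$ is itself a countable base at $z$. By axiom \textbf{(B3)},
\[
\bigcap_{i=1}^{n} -A_i \;=\; -\Bigl(\bigcup_{i=1}^{n} A_i\Bigr),
\]
so defining $W_n := -\bigl(\bigcup_{i=1}^{n} A_i\bigr)$ yields a countable, decreasing base $(W_n)$ of neighbourhoods of $z$ consisting of apartness complements.

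Next, by convergence $(z_n)_{n\in\mathfrak{D}}\to z$, for each $k$ there exists $M_k\in\mathfrak{D}$ such that $z_m\in W_k$ for all $m\succcurlyeq M_k$. Using directedness of $\mathfrak{D}$, set $N_1:=M_1$ and, recursively, choose $N_{k+1}\in\mathfrak{D}$ satisfying both $N_{k+1}\succcurlyeq N_k$ and $N_{k+1}\succcurlyeq M_{k+1}$. The countable net $(z_{N_k})_{k\geq 1}$ then converges to $z$: for any apartness-open neighbourhood $V$ of $z$, pick $k_0$ with $W_{k_0}\subseteq V$; then $k\geq k_0$ forces $N_k\succcurlyeq N_{k_0}\succcurlyeq M_{k_0}$, whence $z_{N_k}\in W_{k_0}\subseteq V$ by nestedness of the $W_n$.

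The step needing most care is the refinement to a nested base via \textbf{(B3)}, which is where the apartness structure actually enters the argument; the remainder is routine net-theoretic bookkeeping, with directedness invoked once per recursive step to satisfy the simultaneous monotonicity and tail-dominance conditions on the $N_k$. No subtle constructive obstructions arise, and nothing beyond the pre-apartness axioms and the definition of the apartness topology is required.
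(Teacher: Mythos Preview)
Your argument is correct and follows essentially the same route as the paper's: choose, by convergence, tail indices $M_k$ for each basic neighbourhood, and then recursively pick an increasing sequence $N_k$ dominating both $N_{k-1}$ and $M_k$. One remark: the preliminary nesting of the base via \textbf{(B3)} is harmless but unnecessary---your own final line already uses $N_k\succcurlyeq N_{k_0}\succcurlyeq M_{k_0}$, which yields $z_{N_k}\in W_{k_0}$ directly without invoking nestedness, and indeed the paper's proof works with the unmodified base $(V_n)$ by maintaining exactly this invariant.
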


\begin{proof}
There exists $n_{1}\in\mathfrak{D}$ such that $\left(  z,z_{n}\right)  \in
V_{1}$ for all $n\succcurlyeq n_{1}.$ Having constructed $n_{k}\in
\mathfrak{D},$ pick $\nu\succcurlyeq n_{k}$ such that $\left(  z,z_{n}\right)
\in V_{k+1}$ for all $n\succcurlyeq\nu.$ There exists $n_{k+1}\in\mathfrak{D}$
such that $n_{k+1}\succcurlyeq\nu$ and $n_{k+1}\succcurlyeq n_{k}.$ Then
$\left(  z,z_{n}\right)  \in V_{k+1}$ for all $n\succcurlyeq n_{k+1}.$ This
completes the inductive construction of a countable subnet $\left(  z_{n_{k}%
}\right)  _{k\geq1}$ such that $\left(  z,z_{n_{j}}\right)  \in V_{k}$
whenever $j\succcurlyeq k$. Since $\left(  V_{n}\right)  _{n\geq1}$ is a base
of neighbourhoods of $z$, it follows that $\left(  z_{n_{k}}\right)  _{k\geq1}
$ converges to $z$.
\end{proof}%


An apartness space is said to be \textbf{first countable} if each of its
elements has a countable base of neighbourhoods in the apartness topology.
First countability is used in several of the succeeding results, in order to
enable us to work with sequences rather than general nets.

\begin{lem}
\label{F5}Let $X$ be a dense uniform subspace of a first-countable, weakly
symmetrically separated apartness space $Y$, and let $\widehat{X}$ be a
complete uniform space in which $X$ is dense. Then there exists a mapping
$g:Y\rightarrow\widehat{X}$ such that

\begin{enumerate}[\em(i)]
\item $g(x)=x$ for each $x\in X$, and

\item for each net $\left(  x_{n}\right)  _{n\in\mathfrak{D}}$ of
elements of $X$ converging to $y\in Y$, the image net $\left(  g(x_{n}%
)\right)  _{n\in\mathfrak{D}}$ converges to $g(y)$.
\end{enumerate}
\end{lem}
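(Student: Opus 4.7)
The plan is to define $g(y)$ as the $\widehat{X}$-limit of an appropriate sequence in $X$ converging to $y$ in $Y$. Fix $y\in Y$. Since $X$ is dense in $Y$ there is some net in $X$ converging to $y$, and first-countability of $Y$ combined with Lemma \ref{F4} extracts from it a sequence $(x_{k})_{k\geq 1}$ in $X$ with $x_{k}\to y$ in $Y$. Because $Y$ is weakly symmetrically separated, Proposition 3.5.1 of \cite{bvbook} makes $(x_{k})$ totally Cauchy in $Y$; since $X$ is a uniform subspace of $Y$, its apartness is the restriction of that on $Y$, and so $(x_{k})$ is totally Cauchy in the uniform space $X$ as well. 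Being countable, Theorem 3.5.12 of \cite{bvbook} upgrades it to uniformly Cauchy in $X$, and completeness of $\widehat{X}$---in which $X$ sits densely---produces a limit in $\widehat{X}$ which I would take to be $g(y)$.

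To show this recipe is unambiguous, I would argue as follows. Given two sequences $(x_{k})$ and $(x_{k}')$ in $X$ both converging to $y$ in $Y$, their interleaving $x_{1},x_{1}',x_{2},x_{2}',\ldots$ also converges to $y$ in $Y$, so by the previous step it is uniformly Cauchy in $X$ and has a single limit in $\widehat{X}$ (uniqueness coming from the Hausdorff, tight-inequality character of uniform spaces mentioned in the introduction); both $(x_{k})$ and $(x_{k}')$, as subsequences, must share that limit. Hence $g(y)$ depends only on $y$.

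Property (i) is then immediate by applying the construction to the constant sequence at any $x\in X$. For (ii), let $(x_{n})_{n\in\mathfrak{D}}$ be a net in $X$ with $x_{n}\to y$ in $Y$. Lemma \ref{F4} yields a countable subnet $(x_{n_{k}})$ still converging to $y$ in $Y$, and by the well-definedness step this subnet converges to $g(y)$ in $\widehat{X}$. The \textbf{WSS} property of $Y$ makes $(x_{n})$ totally Cauchy in $Y$, hence in the uniform subspace $X$, and hence in $\widehat{X}$; since $\widehat{X}$ is a uniform space it possesses the \textbf{NN} property, so Proposition 3.5.15 of \cite{bvbook} lets the whole net $(x_{n})$ inherit convergence to $g(y)$ from its subnet. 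I expect the chief technical demand to lie in the well-definedness argument---verifying carefully that a sequence convergent in the weaker structure of $Y$ really does induce Cauchy, hence convergent, behaviour in the stronger structures of $X$ and $\widehat{X}$---while the remainder is largely an orchestration of the quoted results from \cite{bvbook}.
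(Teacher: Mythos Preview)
Your proposal is correct and follows essentially the same route as the paper: use \textbf{WSS} to get total Cauchyness, Lemma~\ref{F4} to extract a countable subnet, Theorem~3.5.12 of \cite{bvbook} to upgrade to uniform Cauchyness, completeness of $\widehat{X}$ to produce the limit, an interleaving argument for well-definedness, and \textbf{NN} together with Proposition~3.5.15 of \cite{bvbook} to pass from subnet convergence to net convergence. The only organisational difference is that the paper defines $g(y)$ via an arbitrary net (so that (ii) falls out of the well-definedness step itself), whereas you define $g(y)$ via a sequence and then establish (ii) separately; the ingredients and their use are otherwise identical.
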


\begin{proof}
Let $y\in Y,$ and construct a net $\left(  x_{n}\right)  _{n\in\mathfrak{D}}$
in $X$ that converges to $y$ in the apartness topology. Since $Y$ has the
property \textbf{WSS}, $\left(  x_{n}\right)  _{n\in\mathfrak{D}}$ is a
totally Cauchy net in $X$. By Lemma \ref{F4}, there is a countable subnet
$\left(  x_{n_{k}}\right)  _{k\geq1}$ of $\left(  x_{n}\right)  _{n\in
\mathfrak{D}}$ that converges to $y$. It follows from a remark before Lemma
\ref{F4} that the net $\left(  x_{n}\right)  _{n\in\mathfrak{D}}$ is uniformly
Cauchy in $X$. Since $\widehat{X}$ is a complete uniform space, this net
therefore converges to a unique limit $x_{\infty}\in\widehat{X}$. If $\left(
x_{m}^{\prime}\right)  _{m\in\mathfrak{M}}$ is another net in $X$ that
converges to $y$ in the apartness topology, then it, too, has a countable
subnet $\bigl(  x_{m_{j}}^{\prime}\bigr){}_{j\geq1}$ and converges to a
unique limit $x_{\infty}^{\prime}$ in $\widehat{X}$. Now, the sequences
$\left(  x_{n_{k}}\right)  _{k\geq1}$ and $\bigl(  x_{m_{j}}^{\prime}\bigr){}
_{j\geq1}$ both converge to $y$ in the space $Y$. Hence the sequence
$s\equiv\left(  x_{n_{1}},x_{m_{1}}^{\prime}x_{n_{2}},x_{m_{2}}^{\prime
},\ldots\right)  $ converges to $y$. Since $y$ has the property \textbf{WSS},
this sequence is totally Cauchy in $X$; but $X$ is a uniform space, so, being
itself a countable net, the sequence is uniformly Cauchy (again see the last
remark before Lemma \ref{F4}). It therefore converges to a unique limit in
$\widehat{X}$. But it has a subsequence converging to $x_{\infty}$ and a
subsequence converging to $x_{\infty}^{\prime}$. Since $\widehat{X}$, being a
uniform space, has the property \textbf{NN}, another remark before Lemma
\ref{F4} shows that $s$ converges to both $x_{\infty}$ and $x_{\infty}%
^{\prime}$. By the uniqueness of limits in a uniform space, $x_{\infty
}^{\prime}=x_{\infty}$. It now follows that setting $g(y)=x_{\infty}$ gives us
a good definition of a function $g:Y\rightarrow\widehat{X}$ with properties
(i) and (ii).
\end{proof}%


An extremely important property applicable to a pre-apartness space $X$ is the
\textbf{Efremovi\v{c} property} (a form of topological normality):

\begin{enumerate}[\textbf{(EF)}]
\item $S\bowtie T\Rightarrow\exists_{E\subset X}\left(
A\bowtie\lnot E\wedge E\bowtie B\right)  $.
\end{enumerate}%

\noindent
This property implies that%
\begin{equation}
\forall_{A,B\subset X}\left(  A\bowtie B\Leftrightarrow\lnot\lnot A\bowtie
B\right)  ,\label{D11}%
\end{equation}
and that%
\begin{equation}
\forall_{A,B\subset X}\left(  A\bowtie B\Leftrightarrow\overline{A}%
\bowtie\overline{B}\right) \label{D12}%
\end{equation}
(\cite{bvbook}, Propositions 3.1.10, 3.1.7, 3.1.9, and Corollary 3.1.19).

\begin{lem}
\label{F6}Let $\left(  Y,\bowtie\right)  $ be a first-countable, weakly
symmetrically separated apartness space with the Efremovi\v{c} property, let
$X$ be a uniform subspace of $Y$, and let $\widehat{X}$ be a complete uniform
space in which $X$ is dense. Then there exists a strongly continuous mapping
$g$ of $Y$ onto a dense subset of $\widehat{X},$ such that $g(x)=x$ for each
$x\in X$.
\end{lem}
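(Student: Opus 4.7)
\medskip
\noindent\textbf{Proof plan.}
The map $g$ will come straight from Lemma~\ref{F5}, whose hypotheses are in force here under the natural reading that $X$ is dense in $Y$ (which Lemma~\ref{F5} explicitly requires): it gives $g\colon Y\to\widehat{X}$ with $g|_X=\mathrm{id}_X$ and the net-convergence property~(ii). Density of $g(Y)$ in $\widehat{X}$ is then immediate, since $g(Y)\supseteq g(X)=X$ and $X$ is dense in $\widehat{X}$ by hypothesis.

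The real work is verifying that $g$ is strongly continuous. Given $A,B\subseteq Y$ with $g(A)\bowtie g(B)$ in $\widehat{X}$, fix an entourage $U\in\mathcal{U}(\widehat{X})$ witnessing the apartness and, by iteration of the uniform-space subdivision axiom, a symmetric entourage $V$ with $V\circ V\circ V\subseteq U$. Set $A'=X\cap V[g(A)]$ and $B'=X\cap V[g(B)]$. A triangle-type argument on entourages shows $A'\bowtie B'$ in the uniform space $X$: if $x\in A'$ and $y\in B'$ satisfied $(x,y)\in V$, then picking witnesses $a\in A$, $b\in B$ with $(g(a),x),(y,g(b))\in V$ would force $(g(a),g(b))\in V\circ V\circ V\subseteq U$, contrary to $(g(a),g(b))\in\mathord{\sim}U$. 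Because $X$ is a \emph{uniform} subspace of $Y$, the same apartness $A'\bowtie B'$ holds when the sets are viewed inside $Y$.

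The net-convergence property~(ii), together with density of $X$ in $Y$, ensures $A\subseteq\overline{A'}^Y$ and $B\subseteq\overline{B'}^Y$: any $a\in A$ is the $Y$-limit of a net in $X$, which by~(ii) converges to $g(a)$ in $\widehat{X}$ and so eventually lies in $V[g(a)]\cap X\subseteq A'$. The Efremovi\v{c} property of $Y$, in its guise~(\ref{D12}), then upgrades $A'\bowtie B'$ to $\overline{A'}^Y\bowtie\overline{B'}^Y$, and monotonicity of $\bowtie$ (a consequence of \textbf{(B3)}) descends this to $A\bowtie B$ in $Y$, as required.

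The principal obstacle is that we cannot pass apartness information back and forth across $g$ by appeal to its strong continuity, which is what we are trying to prove. The way around is to carve out intermediate sets $A',B'$ inside the common uniform subspace $X$, where the $Y$-inherited and the uniform apartnesses agree, and then use density plus \textbf{(EF)} in $Y$ to inflate that separation back to $A$ and $B$. All hypotheses are thereby accounted for: first-countability and \textbf{(WSS)} feed into the invocation of Lemma~\ref{F5}, and the Efremovi\v{c} property of $Y$ closes the argument at the end.
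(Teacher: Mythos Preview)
Your proposal matches the paper's approach: invoke Lemma~\ref{F5} to construct $g$, read off density of $g(Y)$ from $X\subseteq g(Y)$, and then verify strong continuity---for which the paper actually gives no details beyond setting up, for each $y\in Y$, the explicit net $\mathfrak{D}_y=\{(x,U):x\in X,\ y\in-U,\ x\in-U\}$ (preordered by $-U\subset-U'$) converging to $y$ in $Y$ and to $g(y)$ in $\widehat{X}$, and then referring to the analogous argument in Theorem~3.5.22 of \cite{bvbook}. Your flag that density of $X$ in $Y$ (required by Lemma~\ref{F5}) is missing from the stated hypotheses of Lemma~\ref{F6} is well taken; it is present in the application, Proposition~\ref{F7}.

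One caution for a fully constructive write-up: the reductio you sketch for $A'\bowtie B'$ shows only $A'\times B'\subseteq\lnot V$, whereas the uniform apartness requires $A'\times B'\subseteq\mathord{\sim}W$ for some entourage $W$. That gap is closable---for instance by first applying (\ref{D12}) inside $\widehat{X}$ to pass to closures there before intersecting with $X$---but as phrased the triangle step is classical.
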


\begin{proof}
Construct the mapping $g$ as in Lemma \ref{F5}. Given $y\in Y$, define%
\[
\mathfrak{D}_{y}\equiv\left\{  \left(  x,U\right)  :x\in X,y\in-U,x\in
-U\right\}
\]
and define the preorder $\succcurlyeq$ on $\mathfrak{D}_{y}$ by%
\[
\left(  x,U\right)  \succcurlyeq\left(  x^{\prime},U^{\prime}\right)
\Leftrightarrow-U\subset-U^{\prime}.
\]
Then $\left(  x,U\right)  $ $\rightsquigarrow x$ is a net in $X$ that
converges to $y$ in the space $Y$. We denote this net by $\left(
x_{n}\right)  _{n\in\mathfrak{D}_{y}}$; so if $n=\left(  x,U\right)
\in\mathfrak{D}_{y}$, then $x_{n}=x$. Note that, by the definition of the
mapping $g$, this net converges to $g(y)$ in the space $\widehat{X}$.

The proof that $g$ is strongly continuous uses an argument very similar to one
in the proof of Theorem 3.5.22 of \cite{bvbook}; we omit the details. Finally,
since $g(Y)$ contains $X$, it is dense in $\widehat{X}$.
\end{proof}%


We are now able to give a limited generalisation of the theorem that a dense
subspace of a totally bounded uniform space is totally bounded.

\begin{prop}
\label{F7}Let $Y$ be a first-countable, weakly symmetrically separated,
precompact apartness space with the Efremovi\v{c} property, let $X$ be a dense
uniform subspace of $Y$, and let $\widehat{X}$ be a complete uniform space in
which $X$ is dense. Then $X$ is totally bounded.
\end{prop}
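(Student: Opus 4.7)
The plan is to promote precompactness from $Y$ to the completion $\widehat{X}$ via the extension furnished by Lemma \ref{F6}, read off total boundedness of $\widehat{X}$ from Proposition \ref{AA1}, and then inherit total boundedness down to $X$ using the remark in the introduction that a dense subspace of a totally bounded uniform space is totally bounded.

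Concretely, I would argue as follows. Since $Y$ is first-countable, weakly symmetrically separated, and has the Efremovi\v{c} property, and since $X$ is a uniform subspace of $Y$ densely embedded in the complete uniform space $\widehat{X}$, Lemma \ref{F6} produces a strongly continuous map $g:Y\rightarrow\widehat{X}$ whose image $g(Y)$ is dense in $\widehat{X}$. The precompact apartness space $Y$ then has as its strongly continuous image the apartness subspace $g(Y)\subset\widehat{X}$; by the remark in the introduction, $g(Y)$ is precompact. But $g(Y)$ is dense in $\widehat{X}$, so a second remark in the introduction promotes $\widehat{X}$ itself to a precompact apartness space. Proposition \ref{AA1} now asserts that the precompact uniform space $\widehat{X}$ is totally bounded, and because $X$ is dense in $\widehat{X}$ the desired total boundedness of $X$ follows from the density remark quoted above.

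The substantive content of the argument lives entirely inside Lemma \ref{F6} (and hence Lemma \ref{F5}) together with the hereditary remarks from the introduction, so the step-by-step assembly I propose here contains no genuine obstacle. The only point at which one should pause and verify the obvious is that $g:Y\rightarrow g(Y)$ really is strongly continuous when $g(Y)$ carries the subspace apartness inherited from $\widehat{X}$; this, however, is immediate from the definitions of subspace apartness and strong continuity, and so requires no further work.
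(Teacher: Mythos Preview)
Your proposal is correct and follows essentially the same route as the paper: invoke Lemma \ref{F6} to get a strongly continuous $g:Y\rightarrow\widehat{X}$ with dense image, transfer precompactness along $g$, apply Proposition \ref{AA1}, and then descend to $X$ by density. The only cosmetic difference is that the paper applies Proposition \ref{AA1} directly to $Z=g(Y)$ (a uniform subspace of $\widehat{X}$, hence itself a precompact uniform space) and then uses that $X\subset Z$ is dense in $Z$, whereas you first pass precompactness up from $g(Y)$ to $\widehat{X}$ and apply Proposition \ref{AA1} there; both variants work.
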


\begin{proof}
By Lemma \ref{F6}, there exists a strongly continuous mapping $g$ of $Y$ onto
a dense subspace $Z$ of $\widehat{X}$ such that $g(x)=x$ for each $x\in X$.
Then $g(Y)$, and therefore $Z$, is precompact; whence, by Proposition
\ref{AA1}, $Z$ is totally bounded. It follows that $X$, being clearly dense in
$Z$, is totally bounded.
\end{proof}

\begin{cor}
\label{AA3}Let $Y$ be a first-countable, weakly symmetrically separated,
precompact apartness space with the Efremovi\v{c} property, and $X$ a dense
metric subspace of $Y$. Then $X$ is totally bounded and precompact.
\end{cor}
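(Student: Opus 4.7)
The plan is to reduce this to the previous two results. Since $X$ is a metric subspace of $Y$, it is in particular a uniform subspace of $Y$ (its metric uniformity induces the apartness inherited from $Y$ by the definition of metric subspace), and the hypothesis that $X$ is dense in $Y$ is exactly what Proposition \ref{F7} requires of the pair $(Y,X)$.

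To invoke Proposition \ref{F7} I need to exhibit a complete uniform space $\widehat{X}$ in which $X$ is dense. The natural choice is the metric completion of $X$: as a complete metric space it carries a canonical complete uniform structure, and $X$ sits inside it as a dense uniform subspace. With this $\widehat{X}$ in hand, all the hypotheses of Proposition \ref{F7} are met, so that proposition yields that $X$ is totally bounded.

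Having established total boundedness, precompactness of $X$ follows immediately from Proposition \ref{AA2}, which characterises precompactness for metric spaces as equivalent to total boundedness. So the corollary is essentially a packaging of Proposition \ref{F7} specialised to a metric dense subspace, combined with Proposition \ref{AA2}; there is no genuine obstacle, the only point requiring care being the verification that the completion of a metric subspace supplies an admissible $\widehat{X}$, which is standard.
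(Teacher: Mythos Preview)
Your proposal is correct and follows essentially the same approach as the paper: apply Proposition~\ref{F7} with $\widehat{X}$ taken to be the metric completion of $X$ to obtain total boundedness, and then invoke Proposition~\ref{AA2} to deduce precompactness. The paper's proof is just a terser version of what you wrote.
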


\begin{proof}
In Proposition \ref{F7}, take $\widehat{X}$ to be the metric completion of $X
$, to show that $X$ is totally bounded. Reference to Proposition \ref{AA2}
shows that $X$ is precompact.
\end{proof}

\section{Unions and products of precompact spaces}

Bearing in mind what happens with total boundedness in metric spaces, we would
hope that precompactness passes to finite unions and products of apartness
subspaces. To deal with unions, we need a construction that glues two
apartness spaces together to make a new apartness space.

Let $X,Y$ be (symmetric) pre-apartness spaces, and let%
\[
X\Cup Y\equiv\left(  X\times\left\{  0\right\}  \right)  \cup\left(
Y\times\left\{  1\right\}  \right)  ,
\]
with equality and inequality those induced by their standard counterparts on
$X\times Y$. Subsets of $X\Cup Y$ have the form%
\[
A\equiv\left(  A^{0}\times\left\{  0\right\}  \right)  \cup\left(  A^{1}%
\times\left\{  1\right\}  \right)
\]
with $A^{0}\subset X$ and $A^{1}\subset Y$. We prove that the relation
$\bowtie$ defined for subsets $A,B$ of $X\Cup Y$ by%
\[
A\bowtie_{X\Cup Y}B\Leftrightarrow\left(  A^{0}\bowtie_{X}B^{0}\text{ }%
\wedge\text{ }A^{1}\bowtie_{Y}B^{1}\right)
\]
is an apartness, where such notations as $\bowtie_{X}$ have the obvious interpretation.

Clearly, $\bowtie_{X\Cup Y}$ is symmetric and satisfies \textbf{(B1)}. Before
verifying \textbf{(B2)}, we note that for any $A\subset X\Cup Y$ and for
$i\in\{1,2\}$,%
\[
\left(  -A\right)  ^{i}=-A^{i}\text{ and }\left(
\mathord{\sim}%
A\right)  ^{i}=%
\mathord{\sim}%
A^{i}.
\]
For example,\label{comp}%
\begin{align*}
\left(  x,0\right)  \in-A  & \Leftrightarrow\left(  x,0\right)  \bowtie_{X\Cup
Y}A\\
& \Leftrightarrow\left\{  \left(  x,0\right)  \right\}  ^{0}\bowtie
A^{0}\text{ and }\left\{  \left(  x,0\right)  \right\}  ^{1}\bowtie A^{1}\\
& \Leftrightarrow x\bowtie A^{0}\text{ and }\varnothing\bowtie A^{1}\\
& \Leftrightarrow x\in-A^{0}\text{.}%
\end{align*}
Thus if $\left\{  \left(  x,0\right)  \right\}  \bowtie_{X\Cup Y}A$ in $X\Cup
Y$, then, by \textbf{(B2)} in $X$, $x\in%
\mathord{\sim}%
A^{0}$ and therefore $\left(  x,0\right)  \in%
\mathord{\sim}%
\left(  A^{0}\times\left\{  0\right\}  \right)  $. Since, clearly, $\left(
x,0\right)  \in$ $%
\mathord{\sim}%
\left(  A^{1}\times\left\{  1\right\}  \right)  $, we conclude that $\left(
x,0\right)  \in%
\mathord{\sim}%
A$. Likewise, if $\left(  y,1\right)  \bowtie_{X\Cup Y}A$ in $X\Cup Y$, then
$\left(  y,1\right)  \in%
\mathord{\sim}%
A$. This completes the proof that \textbf{(B2)} is satisfied.

For \textbf{(B3)}, let $A_{1},A_{2}\subset X$ and $B_{1},B_{2}\subset Y$. Then%
\begin{align*}
\left(  A_{1}\cup A_{2}\right)  \bowtie_{X\Cup Y}\left(  B_{1}\cup
B_{2}\right)   & \Leftrightarrow\forall_{i\in\left\{  0,1\right\}  }\left(
\left(  A_{1}\cup A_{2}\right)  ^{i}\bowtie\left(  B_{1}\cup B_{2}\right)
^{i}\right) \\
& \Leftrightarrow\forall_{i\in\left\{  0,1\right\}  }\left(  A_{1}^{i}\cup
A_{2}^{i}\bowtie B_{1}^{i}\cup B_{2}^{i}\right) \\
& \Leftrightarrow\forall_{i\in\left\{  0,1\right\}  }\forall_{j,k\in\left\{
1,2\right\}  }\left(  A_{j}^{i}\bowtie B_{k}^{i}\right) \\
& \Leftrightarrow\forall_{j,k\in\left\{  1,2\right\}  }\forall_{i\in\left\{
0,1\right\}  }\left(  A_{j}^{i}\bowtie B_{k}^{i}\right) \\
& \Leftrightarrow\forall_{j,k\in\left\{  1,2\right\}  }\left(  A_{j}%
\bowtie_{X\Cup Y}B_{k}\right)  .
\end{align*}
To handle \textbf{(B4)}, let $-A\subset%
\mathord{\sim}%
B$ in $X\Cup Y$. Then $-A^{i}=\left(  -A\right)  ^{i}\subset\left(
\mathord{\sim}%
B\right)  ^{i}=%
\mathord{\sim}%
B^{i}$. Applying \textbf{(B4)} in $X$ and $Y$, we obtain $-A^{i}\subset-B^{i}$
and therefore $\left(  -A\right)  ^{i}\subset\left(  -B\right)  ^{i}$. Hence
$-A\subset-B$ in $X\Cup Y$.

Finally, consider $\left(  x,0\right)  \in-A$ in $X\Cup Y$. By the foregoing,
$x\in-A^{0}$; whence, by \textbf{(B5)} in $X$, there exists $S^{0}\subset X$
such that $x\in-S^{0}$ and $X=-A^{0}\cup S^{0}$. Taking $S^{1}=Y$, we see that%
\[
\left\{  \left(  x,0\right)  \right\}  ^{0}=\left\{  x\right\}  \bowtie_{X\Cup
Y}S^{0}\times\left\{  0\right\}
\]
and%
\[
\left\{  \left(  x,0\right)  \right\}  ^{1}=\varnothing\bowtie_{X\Cup Y}%
S^{1}\times\{1\}.
\]
Hence $x\in-S$, where%
\[
S\equiv\left(  S^{0}\times\left\{  0\right\}  \right)  \cup\left(  S^{1}%
\times\{1\}\right)  .
\]
Moreover, if $\left(  x^{\prime},0\right)  \in X\Cup Y$, then $x^{\prime}%
\in-A^{0}\cup S^{0}=\left(  -A\right)  ^{0}\cup S^{0}$; so either $\left(
x^{\prime},0\right)  \in-A$ or $\left(  x^{\prime},0\right)  \in S^{0}%
\times\left\{  0\right\}  $. On the other hand, if $\left(  y,1\right)  \in
X\Cup Y$, then%
\[
\left(  y,1\right)  \in Y\times\left\{  1\right\}  =S^{1}\times\{1\}\subset
S.
\]
Thus $X\Cup Y=-A\cup\left(  S^{0}\times\left\{  0\right\}  \right)  $. A
similar argument, starting with $\left(  y,1\right)  \in-A$ in $X\Cup Y$,
completes the verification of \textbf{(B5)} and hence the proof that
$\bowtie_{X\Cup Y}$ is an apartness on $X\Cup Y$.

We call the space $X\Cup Y$, taken with the apartness we have just
constructed, the \textbf{disjoint union of the apartness spaces} $X$ and $Y$
(in that order).

\begin{lem}
\label{0103aa1}Let $X,Y$ be subspaces of an apartness space $E$. Then the
mapping $g$ of $X\Cup Y$ onto $X\cup Y$ defined by%
\begin{align*}
g(x,0)  & \equiv x\ \ \ \ \left(  x\in X\right)  ,\\
g(y,1)  & \equiv y\ \ \ \ \left(  y\in Y\right)
\end{align*}
is strongly continuous.
\end{lem}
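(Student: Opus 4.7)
The plan is to unwind the definitions and invoke axiom \textbf{(B3)} in $E$ together with the fact that subspace apartness is just restriction. Strong continuity of $g$ asks that, for all subsets $A,B\subset X\Cup Y$,
\[
g(A)\bowtie_{E}g(B)\ \Longrightarrow\ A\bowtie_{X\Cup Y}B.
\]
So I would begin by writing a generic $A\subset X\Cup Y$ in the canonical form $A=(A^{0}\times\{0\})\cup(A^{1}\times\{1\})$ with $A^{0}\subset X$ and $A^{1}\subset Y$, and similarly $B=(B^{0}\times\{0\})\cup(B^{1}\times\{1\})$. From the definition of $g$ it is immediate that
\[
g(A)=A^{0}\cup A^{1},\qquad g(B)=B^{0}\cup B^{1},
\]
both regarded as subsets of $E$.

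Next I would assume the hypothesis $g(A)\bowtie_{E}g(B)$, i.e.\ $(A^{0}\cup A^{1})\bowtie_{E}(B^{0}\cup B^{1})$, and apply axiom \textbf{(B3)} for the apartness on $E$ to conclude that $A^{i}\bowtie_{E}B^{j}$ for all $i,j\in\{0,1\}$. In particular, $A^{0}\bowtie_{E}B^{0}$ and $A^{1}\bowtie_{E}B^{1}$.

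Finally, since $X$ and $Y$ are subspaces of $E$, the apartness relations $\bowtie_{X}$ and $\bowtie_{Y}$ are by definition the restrictions of $\bowtie_{E}$ to subsets of $X$ and $Y$ respectively; hence $A^{0}\bowtie_{X}B^{0}$ and $A^{1}\bowtie_{Y}B^{1}$. By the definition of the disjoint union apartness, this says precisely $A\bowtie_{X\Cup Y}B$, completing the verification of strong continuity.

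There is no real obstacle here: the argument is essentially a bookkeeping exercise, and the only point worth flagging is the (harmless) use of \textbf{(B3)} to split a union-union apartness in $E$ into the four component apartnesses, of which only the two ``diagonal'' ones $A^{0}\bowtie_{E}B^{0}$ and $A^{1}\bowtie_{E}B^{1}$ are needed.
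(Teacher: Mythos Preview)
Your proof is correct and follows essentially the same idea as the paper's: decompose into components and use \textbf{(B3)} (monotonicity of $\bowtie$) in $E$, then pass to the subspace apartnesses on $X$ and $Y$. The only cosmetic difference is that the paper works with the equivalent preimage formulation of strong continuity (showing $A\bowtie B$ in $X\cup Y$ implies $g^{-1}(A)\bowtie_{X\Cup Y}g^{-1}(B)$, via $(g^{-1}(A))^{0}=A\cap X$ etc.), whereas you use the image formulation exactly as stated in the paper's definition.
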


\begin{proof}
Let $A\bowtie B$ in the subspace $X\cup Y$ of $E$. Then%
\[
\left(  g^{-1}(A)\right)  ^{0}=\left(  A\cap X\right)  \bowtie_{E}\left(
B\cap X\right)  =\left(  g^{-1}(B)\right)  ^{0}
\]
and similarly $\left(  g^{-1}(A)\right)  ^{1}\bowtie_{E}\left(  g^{-1}%
(B)\right)  ^{1}$. Hence $g^{-1}(A)\bowtie_{X\Cup Y}g^{-1}(B)$.
\end{proof}%


The foregoing construction and lemma enable us to prove

\begin{prop}
\label{0103aa2}The union of two precompact subsets of an apartness space is precompact.
\end{prop}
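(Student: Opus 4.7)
The plan is to use Lemma~\ref{0103aa1} together with the fact, noted just before Proposition~\ref{AA1}, that strongly continuous images of precompact spaces are precompact. I first show that the disjoint union $A\Cup B$ is itself a precompact apartness space; then, since the map $(a,0)\mapsto a$, $(b,1)\mapsto b$ from $A\Cup B$ onto $A\cup B$ is strongly continuous by Lemma~\ref{0103aa1}, it follows that $A\cup B$ is precompact.

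To witness precompactness of $A\Cup B$, choose dense subsets $D_{A},D_{B}$ of $2^{\mathbf{N}}$ and strongly continuous maps $f_{A}:D_{A}\to A$, $f_{B}:D_{B}\to B$ with dense images. Set
\[
D\equiv\{0\alpha:\alpha\in D_{A}\}\cup\{1\beta:\beta\in D_{B}\}\subset 2^{\mathbf{N}},
\]
which is dense in $2^{\mathbf{N}}$ because the first-bit decomposition partitions $2^{\mathbf{N}}$ into two clopen halves in each of which $D_{A}$ or $D_{B}$ sits densely. The bijection $\iota:D_{A}\Cup D_{B}\to D$ given by $(\alpha,0)\mapsto 0\alpha$ and $(\beta,1)\mapsto 1\beta$ is an apartness isomorphism between the disjoint-union apartness and the subspace apartness from $2^{\mathbf{N}}$, since prepending a fixed initial bit rescales distances by a uniform factor, and two subsets of $D$ beginning with opposite bits are at distance bounded below. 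Define $h:D\to A\Cup B$ by $h(0\alpha)\equiv(f_{A}(\alpha),0)$ and $h(1\beta)\equiv(f_{B}(\beta),1)$. Its image $(f_{A}(D_{A})\times\{0\})\cup(f_{B}(D_{B})\times\{1\})$ is dense in $A\Cup B$ by the description of apartness neighbourhoods in $X\Cup Y$ given just before Lemma~\ref{0103aa1}; and $h$ is strongly continuous, because $h(P)\bowtie_{A\Cup B}h(Q)$ unpacks via the definition of $\bowtie_{A\Cup B}$ as $f_{A}(P^{0})\bowtie_{A}f_{A}(Q^{0})$ and $f_{B}(P^{1})\bowtie_{B}f_{B}(Q^{1})$, and strong continuity of $f_{A}$ and $f_{B}$ then yields $P^{0}\bowtie Q^{0}$ in $D_{A}$ and $P^{1}\bowtie Q^{1}$ in $D_{B}$, which is exactly $P\bowtie Q$ in $D$ under $\iota$.

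The only slightly delicate step is the identification, via $\iota$, of the disjoint-union apartness on $D_{A}\Cup D_{B}$ with the metric apartness that $D$ inherits from $2^{\mathbf{N}}$; but this reduces to the elementary observation that the first-coordinate split of $2^{\mathbf{N}}$ is clopen. Everything else is routine unwinding of the definitions, combined with Lemma~\ref{0103aa1} and the image-preservation principle for precompactness already recorded in the paper.
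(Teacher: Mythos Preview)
Your argument is correct and follows the same overall architecture as the paper's: first show that the disjoint union $A\Cup B$ is precompact, then push this down to $A\cup B$ via the strongly continuous map of Lemma~\ref{0103aa1}. The difference lies in how the witnessing map out of a dense subset of $2^{\mathbf{N}}$ is built. The paper works in $2^{\mathbf{N}}\times\{0,1\}$, verifies strong continuity there using the product apartness, and then invokes Theorem~\ref{one} to obtain a uniformly continuous open surjection $f:2^{\mathbf{N}}\to 2^{\mathbf{N}}\times\{0,1\}$, pulling the dense domain back along $f$. You instead exploit the explicit first-bit homeomorphism $2^{\mathbf{N}}\cong\{0,1\}\times 2^{\mathbf{N}}$, so that the set $D=\{0\alpha:\alpha\in D_{A}\}\cup\{1\beta:\beta\in D_{B}\}$ already sits inside $2^{\mathbf{N}}$ and no appeal to Theorem~\ref{one} is needed. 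Your identification of the disjoint-union apartness on $D_{A}\Cup D_{B}$ with the metric apartness on $D$ is the concrete replacement for the paper's product-apartness computation; the direction actually used (from $P^{0}\bowtie Q^{0}$ and $P^{1}\bowtie Q^{1}$ to $P\bowtie Q$ in $2^{\mathbf{N}}$) follows exactly as you indicate, since same-first-bit pairs have their tail distances uniformly rescaled and opposite-first-bit pairs are bounded apart. Your route is therefore a little more elementary; the paper's is more modular, treating $2^{\mathbf{N}}\times\{0,1\}$ as just another compact metric target for Theorem~\ref{one}.
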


\begin{proof}
Let $X_{0},X_{1}$ be precompact subsets of an apartness space $E$. Then for
each $k,$ there exist a dense subset $D_{k}$ of $2^{\mathbf{N}}$ and a
strongly continuous mapping $h_{k}$ of $D_{k}$ onto a dense subset of $X_{k}$.
Define a mapping%
\[
w:\left(  D_{1}\times\left\{  0\right\}  \right)  \cup\left(  D_{2}%
\times\{1\}\right)  \rightarrow X_{0}\Cup X_{1}
\]
by%
\[
w(\alpha,0)\equiv\left(  h_{1}(\alpha),0\right)  \text{ and }w(\alpha
,1)\equiv\left(  h_{2}(\alpha),1\right)  .
\]
We first prove that%
\[
S\equiv\left(  h_{1}(D_{1})\times\left\{  0\right\}  \right)  \cup\left(
h_{2}(D_{2})\times\{1\}\right)  ,
\]
the range of $w$, is dense in the apartness space $X_{0}\Cup X_{1}$. If $x\in
X_{0}$ and $\left(  x,0\right)  \in-A$ in $X_{0}\Cup X_{1}$, then (as we
showed on page \pageref{comp}) $x\in-A^{0}$ in $X_{0}$. Since $h_{1}(D_{1})$
is dense in $X_{0}$, there exists $\alpha\in D_{1}$ such that $h_{1}%
(\alpha)\in-A^{0}$; since $\left\{  \left(  h_{1}(\alpha),0\right)  \right\}
^{1}=\varnothing\bowtie A^{1}$ in $X_{1}$, we now see that $\left\{  \left(
h_{1}(\alpha),0\right)  \right\}  \bowtie_{X_{0}\Cup X_{1}}A$ and hence that
$\left(  h_{1}(\alpha),0\right)  \in-A$. Thus $h_{1}(D_{1})\times\left\{
0\right\}  $ is dense in $X_{0}\times\left\{  0\right\}  $ relative to the
apartness topology on $X_{0}\Cup X_{1}$. Likewise, $h_{2}(D_{2})\times\{1\}$
is dense in $X_{1}\times\{1\}$. Hence $S$ is dense in $X_{0}\Cup X_{1}$.

To prove that $w$ is strongly continuous, let $A\bowtie_{X_{0}\Cup X_{1}}B$.
Then $A^{0}$ $\bowtie B^{0}$ in $X_{0}$, so $h_{1}^{-1}(A^{0})\bowtie
h_{1}^{-1}(B^{0})$ in $D_{1}$. The definition of the product apartness on
$2^{\mathbf{N}}\times\left\{  0,1\right\}  $ (see Section 3.7 of
\cite{bvbook}) now yields%
\[
w^{-1}\left(  A^{0}\times\left\{  0\right\}  \right)  =h_{1}^{-1}(A^{0}%
)\times\left\{  0\right\}  \bowtie h_{1}^{-1}(B^{0})\times\left\{  0\right\}
=w^{-1}(B^{0}\times\left\{  0\right\}  ).
\]
On the other hand, denoting the $k$th projection of the Cartesian product of
two sets by $\mathrm{pr}_{k}$, we have
\begin{align*}
w^{-1}\left(  A^{0}\times\left\{  0\right\}  \right)   & \subset2^{\mathbf{N}%
}\times\left\{  0\right\}  ,\\
w^{-1}\left(  B^{1}\times\left\{  1\right\}  \right)   & \subset2^{\mathbf{N}%
}\times\left\{  1\right\}  ,\text{ and}\\
\mathrm{pr}_{2}\left(  2^{\mathbf{N}}\times\left\{  0\right\}  \right)   &
\bowtie\mathrm{pr}_{2}\left(  2^{\mathbf{N}}\times\left\{  1\right\}  \right)
;
\end{align*}
whence, by definition of the product apartness on $2^{\mathbf{N}}%
\times\left\{  0,1\right\}  $ , $w^{-1}(A^{0}\times\left\{  0\right\}
)\bowtie w^{-1}(B^{1}\times\left\{  1\right\}  )$. Thus, by \textbf{(B3)} in the
metric space $2^{\mathbf{N}}\times\left\{  0,1\right\}  $,
\begin{align*}
w^{-1}(A^{0}\times\left\{  0\right\}  )  & \bowtie w^{-1}\left(  B^{0}%
\times\left\{  0\right\}  \right)  \cup w^{-1}(B^{1}\times\left\{  1\right\}
)\\
& =w^{-1}\left(  \left(  B^{0}\times\left\{  0\right\}  \right)  \cup
(B^{1}\times\left\{  1\right\}  )\right)  =w^{-1}(B).
\end{align*}
Similarly, $w^{-1}(A^{1}\times\left\{  1\right\}  )\bowtie w^{-1}(B)$; whence,
by another application of \textbf{(B3)} in $2^{\mathbf{N}}\times\left\{
0,1\right\}  $, we obtain $w^{-1}(A)\bowtie w^{-1}(B)$. Thus $w$ is strongly
continuous. Now, $2^{\mathbf{N}}\times\left\{  0,1\right\}  $, being the
product of two compact metric spaces, is a compact metric space; so, by
Theorem \ref{one}, there exists a uniformly---and hence strongly---continuous
open mapping $f$ of $2^{\mathbf{N}}$ onto $2^{\mathbf{N}}\times\left\{
0,1\right\}  $. Since $\left(  D_{1}\times\left\{  0\right\}  \right)
\cup\left(  D_{2}\times\{1\}\right)  $ is dense in $2^{\mathbf{N}}%
\times\left\{  0,1\right\}  $ and $f$ is an open mapping,%
\[
D\equiv f^{-1}\left(  \left(  D_{1}\times\left\{  0\right\}  \right)
\cup\left(  D_{2}\times\{1\}\right)  \right)
\]
is dense in $2^{\mathbf{N}}$. Moreover, the restriction of $w\circ f$ to $D$
is a strongly continuous mapping of $D$ onto the dense subset $S$ of
$X_{0}\Cup X_{1}$. On the other hand, the mapping $g$ of $X_{0}\Cup X_{1}$
onto $X_{0}\cup X_{1}$ defined in Lemma \ref{0103aa1} is strongly continuous.
Hence $g\circ w\circ f$ is a strongly continuous mapping of the dense subset
$D$ of $2^{\mathbf{N}}$ onto the dense subset $h_{1}(D_{1})\cup h_{2}(D_{2})$
of $X_{0}\cup X_{1}$. Thus $X_{0}\cup X_{1}$ is precompact.
\end{proof}%


We now pass from unions to products.

\begin{prop}
\label{0103aa4}The product of two apartness spaces is precompact if and only
if each of the factors is precompact.
\end{prop}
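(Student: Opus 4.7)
The plan is to handle the two directions separately, with the backward direction mimicking the construction already used for disjoint unions in the proof of Proposition \ref{0103aa2}.

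For the forward direction, the strategy is to exploit that the projections $\mathrm{pr}_X : X \times Y \to X$ and $\mathrm{pr}_Y : X \times Y \to Y$ are strongly continuous with respect to the product apartness (a standard property; see Section 3.7 of \cite{bvbook}). Since the strongly continuous image of a precompact apartness space is precompact (noted just after the definition of precompactness), each factor arises as such an image of $X \times Y$ and is therefore precompact.

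For the reverse direction, the plan is to choose dense subsets $D_k \subset 2^{\mathbf{N}}$ together with strongly continuous mappings $h_k$ of $D_k$ onto dense subsets of the two factors ($k=1,2$), and to define
\[
w : D_1 \times D_2 \to X \times Y, \qquad w(\alpha,\beta) \equiv (h_1(\alpha), h_2(\beta)).
\]
Two things must then be verified:
\begin{enumerate}[(a)]
\item $h_1(D_1) \times h_2(D_2)$ is dense in $X \times Y$, which follows from the fact that a product of dense sets is dense in the product apartness topology;
\item $w$ is strongly continuous, which is deduced from the strong continuity of $h_1,h_2$ together with the definition of the product apartness on $D_1 \times D_2$ and $X \times Y$.
\end{enumerate}
Once these are in hand, Theorem \ref{one} applied to the compact metric space $2^{\mathbf{N}} \times 2^{\mathbf{N}}$ produces a uniformly continuous open surjection $f : 2^{\mathbf{N}} \to 2^{\mathbf{N}} \times 2^{\mathbf{N}}$. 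Because $f$ is open and $D_1 \times D_2$ is dense in $2^{\mathbf{N}} \times 2^{\mathbf{N}}$, the preimage $D \equiv f^{-1}(D_1 \times D_2)$ is dense in $2^{\mathbf{N}}$. The composition $w \circ f$, restricted to $D$, is then a strongly continuous map of a dense subset of $2^{\mathbf{N}}$ onto the dense subset $h_1(D_1) \times h_2(D_2)$ of $X \times Y$, which establishes precompactness of the product.

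The main technical obstacle is step (b): showing that the product map $w$ is strongly continuous. Given $A \bowtie_{X\times Y} B$, one must unpack the definition of the product apartness, decompose $A$ and $B$ into slices built from subsets of $X$ and $Y$, apply strong continuity of $h_1$ and $h_2$ separately to those slices, and then reassemble the pieces using axiom \textbf{(B3)} and the product-apartness clauses --- in close analogy with the final paragraph of the proof of Proposition \ref{0103aa2}, where the analogous verification is carried out for the disjoint-union construction. The remaining ingredients (density of a product of dense sets, density of the $f$-preimage of a dense set under an open map, and composition of strongly continuous maps) are entirely routine.
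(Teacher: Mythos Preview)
Your proposal is correct and follows essentially the same route as the paper: the forward direction via strong continuity of the projections, and the converse via the product map $(h_1,h_2):D_1\times D_2\to X_0\times X_1$. The only differences are that the paper dispatches your ``main technical obstacle'' (b) in one line by citing Proposition~3.7.6 of \cite{bvbook} (strong continuity of a product of strongly continuous maps), and it leaves the passage from $2^{\mathbf{N}}\times 2^{\mathbf{N}}$ back to $2^{\mathbf{N}}$ implicit rather than invoking Theorem~\ref{one} explicitly as you do.
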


\begin{proof}
Let $X_{0},X_{1}$ be apartness spaces, and $X$ their product apartness space.
If $X$ is precompact, then, since the projection mappings from $X$ onto its
factors are strongly continuous (\cite{bvbook}, Corollary 3.7.3) and hence preserve precompactness, we see that $X_{0},X_{1}$ are precompact.

Suppose, conversely, that the two factor spaces are precompact. For each $k$,
there exist a dense subset $D_{k}$ of $2^{\mathbf{N}}$ and a strongly
continuous mapping $h_{k}$ of $D_{k}$ onto a dense subset of $X_{k}$. Then, by
Proposition 3.7.6 of \cite{bvbook},%
\[
(h_{1},h_{2}):\left(  \alpha_{1},\alpha_{2}\right)  \rightsquigarrow\left(
h_{1}(\alpha_{1}),h_{2}(\alpha_{2})\right)
\]
is a strongly continuous mapping of the dense subspace $D_{1}\times D_{2}$ of
$2^{\mathbf{N}}\times2^{\mathbf{N}}$ onto the dense subspace $h_{1}%
(D_{1})\times h_{2}(D_{2})$ of $X_{0}\times X_{1}$, which is therefore
precompact.%
\end{proof}%


This completes our introductory look at a possible generalisation of total
boundedness in the context of an arbitrary apartness space. While most of the
preceding results are exactly what one might wish for, there remain at least
two undesirable features of the theory:

\begin{enumerate}[$-$]
\item the lack of a general analogue of the proposition that if a uniform
space is totally bounded, then so are its dense subsets (Proposition \ref{F7}
being a very restricted version of such an analogue);

\item the inability to replace \emph{metric space} by \emph{separable
uniform space} in Proposition \ref{AA2}.
\end{enumerate}%

\noindent
Moreover, since $2^{\mathbf{N}}$ is separable, so is each precompact space.
Nevertheless, precompactness may well be worthy of further constructive exploration.%

%

%

\noindent
\textbf{Acknowledgement.} \ This work was done in part when the author was a
guest of Professor Helmut Schwichtenberg in the Logic Section of the
Mathematisches Institut, Ludwig-Maximilians-Universit\"{a}t, M\"{u}nchen.%


%

\end{document}